\newtheorem{theorem}{Theorem}
\newtheorem{prop}{Proposition}
\newtheorem{lemma}{Lemma}
\newtheorem{corollary}{Corollary}[theorem]
\newcommand*{\rom}[1]{\expandafter\@slowromancap\romannumeral #1@}
\definecolor{highlightcolor}{RGB}{255,255,0}
\title {Enhancing Sustainability in HAPS-Assisted 6G Networks: Load Estimation Aware Cell Switching}
\begin{document}

 \author{\rm{Maryam Salamatmoghadasi}, \rm{Metin Ozturk}, \rm{Halim Yanikomeroglu}
 \thanks{This research has been sponsored in part by the NSERC Create program entitled TrustCAV and in part by The Scientific and Technological Research Council of Türkiye (TUBITAK).
 
Maryam Salamatmoghadasi and Halim Yanikomeroglu are with Non-Terrestrial Networks Lab, Department of Systems and Computer Engineering, Carleton University, Ottawa, ON K1S5B6, Canada. Metin Ozturk is with Electrical and Electronics Engineering, Ankara Yildirim Beyazit University, Ankara, 06010, Turkiye.}}

\maketitle
\begin{abstract}
This study introduces and addresses the critical challenge of traffic load estimation in cell switching within vertical heterogeneous networks (vHetNets).
The effectiveness of cell switching is significantly limited by the lack of accurate traffic load data for small base stations (SBSs) in sleep mode, making many load-dependent energy-saving approaches impractical, as they assume perfect knowledge of traffic loads—an assumption that is unrealistic when SBSs are inactive.
In other words, when SBSs are in sleep mode, their traffic loads cannot be directly known and can only be estimated, inevitably with corresponding errors.
Rather than proposing a new switching algorithm, we focus on eliminating this foundational barrier by exploring effective prediction techniques. A novel vHetNet model is considered, integrating a high-altitude platform station (HAPS) as a super macro base station (SMBS).
We investigate both spatial and temporal load estimation approaches, including three spatial interpolation schemes—random neighboring selection, distance-based selection, and multi-level clustering (MLC)—alongside a temporal deep learning method based on long short-term memory (LSTM) networks. 
Using a real-world dataset for empirical validation, our results show that both spatial and temporal methods significantly improve estimation accuracy, with the MLC and LSTM approaches demonstrating particularly strong performance.

\end{abstract}
\begin{IEEEkeywords}
 HAPS, vHetNet, traffic load estimation, cell switching, power consumption, sustainability, 6G
\end{IEEEkeywords}
\section{Introduction}

With the advent of sixth-generation (6G) cellular networks, expected to support an unprecedented number of devices per square kilometer, the pursuit of enhanced connectivity and efficiency becomes imperative. 
However, this expansion faces significant challenges, including a notable increase in energy consumption within radio access networks~(RANs), with base stations~(BSs), particularly small BSs~(SBSs), being major contributors~\cite{3333333}.
In this regard, strategically deactivating BSs or putting them into sleep mode during low activity periods emerges as a feasible solution for enhancing energy efficiency and network sustainability. 
Nonetheless, implementing effective cell switching strategies is severely constrained by the absence of precise traffic load data of sleeping SBSs at the next time slot, which is critical for informed offloading and energy optimization decisions. Nonetheless, most existing studies rely on an unrealistic assumption of perfect traffic load knowledge for sleeping SBSs~\cite{10304250,HETS2023P,ELAA2022JR,EOMK2017JR,Metin_VFA_CellSwitch}.

Overcoming this hurdle by accurately estimating the traffic loads of sleeping SBSs is crucial, bridging the gap between theory and practice and unlocking the full potential of cell switching for significantly improving sustainability in vertical heterogeneous networks (vHetNets), which integrates satellite, high altitude platform station (HAPS), uncrewed aerial vehicle (UAV), and terrestrial BSs to ensure comprehensive global coverage~\cite{10938203}. HAPS, in particular, provides wide-area, line-of-sight (LoS) coverage and large backhaul capacity, making them ideal super macro base station (SMBS) candidates in support of terrestrial networks. This motivates our use of HAPS in the vHetNet model.
\textit{Therefore, in this study, instead of taking a small step forward and developing another advanced cell switching algorithm, we concentrate on removing the barrier between the state-of-the-art advanced techniques and their real-life implementations, because the majority of the existing algorithms in the literature are impractical due to the traffic load estimation problem.}

Research on cell switching has extensively focused on more efficient BS deactivation strategies to reduce network energy consumption. For instance, the study in~\cite{10304250} examined a vHetNet model with a HAPS functioning as an SMBS, alongside a macro BS (MBS) and several SBSs, aiming to optimize sleep mode management of SBSs and utilize HAPS-SMBS capabilities to lower energy use while maintaining user quality-of-service (QoS).
Similarly, research in~\cite{HETS2023P} addressed cell switching challenges with HAPS-SMBS, specifically targeting traffic offloading from deactivated BSs using a sorting algorithm that prioritizes BSs with lower traffic loads for deactivation.
A tiered sleep mode system that adjusts sleep depth according to device activity was proposed in~\cite{ELAA2022JR}, featuring decentralized control for scalability and efficiency.
Additionally, the study in~\cite{EOMK2017JR} considered the control data separated architecture (CDSA) and implemented a genetic algorithm to optimize energy savings in HetNets by managing user associations and BS deactivation through deterministic algorithms.
Furthermore, the work in~\cite{Metin_VFA_CellSwitch} introduced a value function approximation (VFA)-based reinforcement learning (RL) algorithm for cell switching in ultra-dense networks, demonstrating scalable energy savings while maintaining QoS.

However, the challenge of estimating the traffic load for sleeping BS remains unaddressed, limiting the practical application of these studies and their contribution to the achievement of the sustainability goals of 6G networks as outlined in the 6G framework of the International Telecommunication Union (ITU)~\cite{itu_vision_june_23}.
Addressing this critical gap, our work presents a pioneering effort to explore both the spatial and temporal dimensions of traffic load estimation for sleeping SBSs—an aspect largely overlooked in the current literature. We go beyond conventional assumptions of perfect traffic knowledge by systematically evaluating the effects of estimation errors on cell switching performance in a realistic vHetNet environment~\cite{10304250}. On the spatial side, we investigate three interpolation techniques: random neighboring selection, distance-based selection, and clustering-based selection. 
On the temporal side, we introduce a deep learning-based solution using long short-term memory (LSTM) networks—marking a novel application of LSTM to estimate SBS traffic loads for energy-aware cell switching in vHetNets.
All proposed methods are validated using a real-world call detail record (CDR) dataset from Milan~\cite{DVN/EGZHFV_2015}, ensuring practical relevance and generalizability. The key contributions of this work are summarized as follows:

\begin{itemize} 
\item Addressing the challenge of traffic load estimation for sleeping SBSs in cell switching within vHetNets. \item Implementing and evaluating three spatial estimation methods—random neighboring selection, distance-based selection, and $k$-means clustering—along with a novel temporal prediction approach using LSTM networks. \item Developing a mathematical framework to analyze the behavior and accuracy of spatial interpolation techniques. \item Utilizing a real CDR dataset to empirically validate the spatial and temporal estimation schemes under realistic network traffic patterns. 
\end{itemize}

\section{System Model}\label{sec:model}
\subsection{Network Model}
Our study explores a vHetNet depicted in Fig. \ref{fig-1}, comprising a macro cell (MC) with a single MBS and $n \in \mathbb{N}$ SBSs.
Additionally, a HAPS-SMBS is integrated, potentially serving multiple MCs, into the network.
The primary function of SBSs is to deliver data services and address user-specific requirements, while MBS and HAPS-SMBS ensure consistent network coverage and manage control signals.
A key role of the HAPS-SMBS is to efficiently manage traffic offloading from SBSs during low-traffic periods, utilizing its extensive LoS and large capacity~\cite{9380673}. 

\begin{figure}[t]
\centering
\includegraphics[width = 7.cm]{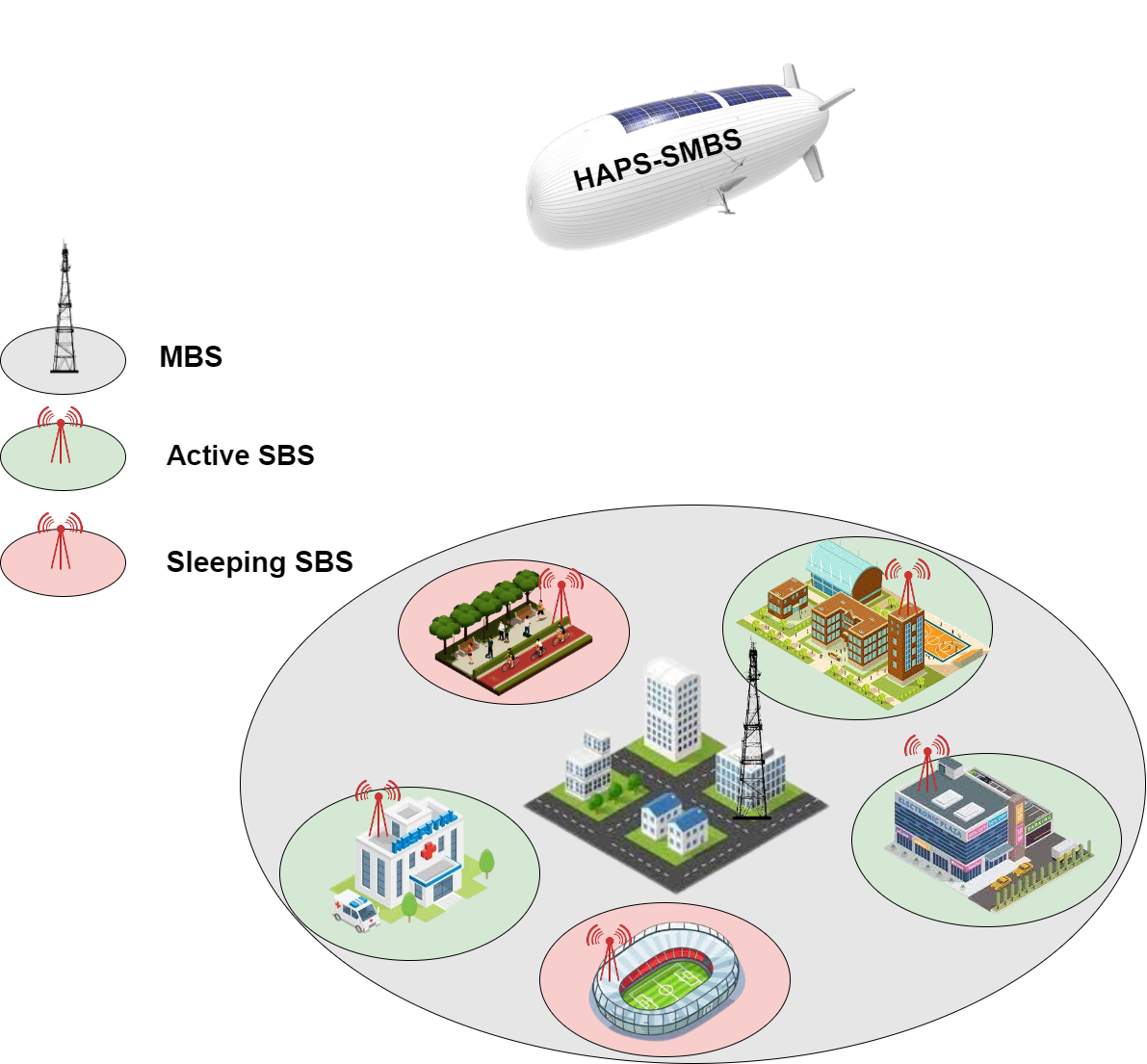}
\caption{A vHetNet model with an MBS, multiple SBSs, and a HAPS-SMBS.}
\label{fig-1}
\vspace{-1em}
\end{figure}

\subsection{Traffic Load Dependent Network Power Consumption}
The power consumption of each BS in the network is calculated based on the energy-aware radio and network technologies (EARTH) power consumption model. 
For the $i$-th BS, the power consumption at any given time, denoted as $P_i$, is expressed as \cite{7925662}
\begin{equation}
P_i  = \left\{ {\begin{array}{*{20}c}
   {\begin{array}{*{20}c}
   {P_{\text{o},i}  + \eta _i \lambda _i P_{\text{t},i} }, & {0 < \lambda _i  < 1,}  
\end{array}}  \\
   {\begin{array}{*{20}c}
   {P_{\text{s},i} },  \; \;   \; \; \; \; \; \; \;\; \; \; \; \; \; \; \; \; \; \; \; \; \; & {\lambda _i  = 0},  \\
\end{array}}  \\
\end{array}} \right.
\label{eq1}
\end{equation}
where $P_{\text{o},i}$ represents the operational circuit power consumption, $\eta_i$ is the power amplifier efficiency, $\lambda_i$ is the load factor (i.e., the normalized traffic load), $P_{\text{t},i}$ is the transmit power, and $P_{\text{s},i}$ is the power consumption in sleep mode. 
The total instantaneous power consumption of vHetNet, denoted as $P_\text{T}$, is given by
\begin{equation}
P_\text{T} = P_\text{H}  + P_\text{M}  + \sum\limits_{j = 1}^s {P_j }, 
\label{eq2}
\end{equation}
where $P_\text{H}$ and $P_\text{M}$ denote the power consumption of HAPS-SMBS and MBS at any given moment, respectively, which are calculated based on the $(0 < \lambda _i  < 1)$ case in \eqref{eq1} as HAPS-SMBS and MBS are always active in our modeling. 
Meanwhile, $P_j$ represents the power consumption of SBS $j$ and $s$ signifies the total number of SBSs within the network.

\subsection{Data Set and Data Processing}
To assess power consumption as defined in~\eqref{eq1}, we require the load factor $\lambda_i$ for each BS. 
Thus, we employ a real CDR data set from Telecom Italia~\cite{DVN/EGZHFV_2015} that captures user activity in Milan, partitioned into 10,000 grids of $235\times 235$ meters. This activity includes calls, texts, and internet usage recorded every 10 minutes over December 2013. We consolidate these activities into a single measure of traffic load per grid. Each SBS is then assigned a normalized traffic load from a randomly selected grid to simulate its corresponding cell activity.

\section{Multi-Dimensional Traffic Load Estimation: Spatial and Temporal Perspectives}\label{sec:method}
The model in~\eqref{eq1} reveals that the network power consumption and the accuracy of cell switching schemes are highly dependent on the traffic loads of BSs~\cite{10304250,Metin_VFA_CellSwitch,earth}, therefore any error in the traffic load estimation can significantly affect the cell switching performance.
We employ three spatial interpolation methodologies and a temporal modeling approach using LSTM to estimate the traffic load of sleeping SBSs. 
\subsection{Geographical Distance-Based Traffic Load Estimation}
This method considers the proximity of neighboring cells to estimate the traffic load of a sleeping SBS. It includes two sub-methods based on the presence of a weighting mechanism, which is developed to prioritize the effect of closer cells.
 
\subsubsection{Distance-Based without Weighting} In this approach, the traffic load of a sleeping SBS is estimated by averaging the traffic loads of its neighboring cells, arranged incrementally based on proximity. All neighboring cells contribute equally to the estimation, regardless of their distance. The estimated traffic load of SBS $j$, $\hat\lambda _j$, is calculated as~\cite{maryamarxive}
\begin{equation}
    \hat\lambda _j  = \frac{1}{N}\sum\limits_{a = 1}^N {\lambda _a,} 
\label{eq14}
\end{equation}
where ${{\lambda_a}}$ represents the traffic load of cell $a$, and $N$ is defined as the number of cells included in the estimation process.
Note that usually $N\gg s$, as $N$ encompasses all the cells available for the estimation process, while $s$ is the number of SBSs within a single vHetNet with a single MC.
  
\subsubsection{Distance-Based with Weighting} This method refines the previous approach by assigning different weights to neighboring cells based on their distance from the sleeping SBS. The closer a cell is, the more it influences the estimated traffic load. The weighted traffic load, $\hat\lambda _j$, is calculated as~\cite{maryamarxive}
\begin{equation}
\hat \lambda _j  = {\sum\limits_{a = 1}^N {\lambda _a  \times w_{j,a} } }\Big/{{\sum\limits_{a = 1}^N {w_{j,a} } }},
\label{eq15}
\end{equation}
where the weighting factor, ${w_{j,a}}$, is defined as
\begin{equation} 
w_{j,a}  = \frac{{d_{\max } }}{{d_{j,a}^n }},\quad
n \in \mathbb{R}^+, 
\label{eq16}
\end{equation}
where ${d_{\max }}$ is the maximum distance between the sleeping SBS and its neighboring cells included in the estimation, and ${d_{j,a}}$ is the distance between the sleeping SBS ${j}$ and the neighboring SBS ${a}$.
\begin{theorem}
The error in estimating the traffic load of a sleeping SBS decreases as the exponent $n$, representing the power of the distance between the sleeping SBS and its neighboring cells, increases.
\label{theorem2}
\end{theorem}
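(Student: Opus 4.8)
The plan is to first strip the estimator in \eqref{eq15} down to its essential form. Since $d_{\max}$ is a single constant appearing in every weight $w_{j,a}=d_{\max}/d_{j,a}^{n}$ of \eqref{eq16}, it cancels between numerator and denominator, leaving
\begin{equation}
\hat\lambda_j(n)=\frac{\sum_{a=1}^N \lambda_a\, d_{j,a}^{-n}}{\sum_{a=1}^N d_{j,a}^{-n}}=\sum_{a=1}^N p_a(n)\,\lambda_a,\qquad p_a(n)=\frac{d_{j,a}^{-n}}{\sum_{b=1}^N d_{j,b}^{-n}}.
\label{eq:plan1}
\end{equation}
This recasts the estimate as a convex combination of the neighboring loads with weights $p_a(n)$ summing to one, so the whole question reduces to understanding how the distribution $\{p_a(n)\}$ reshapes as $n$ grows.

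Next I would analyze the limit $n\to\infty$. Ordering the neighbors so that $d_{j,(1)}\le d_{j,(2)}\le\cdots$ and writing $p_{(1)}(n)/p_{(a)}(n)=(d_{j,(a)}/d_{j,(1)})^{n}$, every ratio with $d_{j,(a)}>d_{j,(1)}$ diverges, so the weight mass concentrates on the nearest cell and $\hat\lambda_j(n)\to\lambda_{(1)}$. Thus raising $n$ continuously deforms the estimator from a near-uniform average (small $n$) toward the pure nearest-neighbor value (large $n$).

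To tie the weight concentration to the error itself, I would differentiate \eqref{eq:plan1} in $n$. Using $d_{j,a}^{-n}=e^{-n\ln d_{j,a}}$ one obtains the compact identity
\begin{equation}
\frac{d\hat\lambda_j}{dn}=-\operatorname{Cov}_{p(n)}\!\big(\lambda,\ln d_{j,\cdot}\big),
\label{eq:plan2}
\end{equation}
where the covariance is taken under the distribution $p(n)$. This makes the monotone direction of $\hat\lambda_j(n)$ explicit: the estimate drifts steadily toward the loads of the closest cells, and combined with the limit above, the error $|\hat\lambda_j(n)-\lambda_j|$ is driven toward $|\lambda_{(1)}-\lambda_j|$, the irreducible nearest-neighbor error.

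The hard part will be closing the gap between ``the estimate moves toward the nearest cell'' and ``the error is monotonically decreasing.'' Equation \eqref{eq:plan2} alone does not force monotone error reduction: absent a structural link between load and distance, the covariance can change sign and the estimate can overshoot $\lambda_j$. The theorem must therefore rest on the spatial-correlation premise underlying the entire interpolation scheme—namely that geographically closer cells carry loads closer to that of the sleeping SBS, so $|\lambda_{(a)}-\lambda_j|$ is (stochastically) nondecreasing in $d_{j,a}$. Under that assumption the nearest cell is the most reliable single proxy, and concentrating weight on it by increasing $n$ drives the expected error down. I expect the main effort to lie in stating this spatial-regularity assumption cleanly and showing that it makes the right-hand side of \eqref{eq:plan2} point consistently toward $\lambda_j$.
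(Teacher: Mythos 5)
Your proposal is correct and, at its core, follows the same strategy as the paper's proof: cancel $d_{\max}$, observe that increasing $n$ concentrates the weights on the closest cells, and then invoke spatial correlation of traffic (closer cells carry loads more similar to that of the sleeping SBS) to conclude that the error decreases. The differences are in execution, and yours is tighter in two places. First, the paper carries out the concentration step through the pointwise limit of $1/d_{j,a}^{n}$, split into the cases $d_{j,a}>1$, $d_{j,a}=1$, and $d_{j,a}<1$; this trichotomy depends on the arbitrary unit in which distance is measured, whereas your normalized-weight formulation $p_a(n)=d_{j,a}^{-n}/\sum_{b} d_{j,b}^{-n}$ together with the ratio argument $p_{(1)}(n)/p_{(a)}(n)=(d_{j,(a)}/d_{j,(1)})^{n}$ shows concentration on the nearest neighbor regardless of scale, which is the statement actually needed. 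Second, your covariance identity $\mathrm{d}\hat\lambda_j/\mathrm{d}n=-\operatorname{Cov}_{p(n)}(\lambda,\ln d_{j,\cdot})$ has no counterpart in the paper and makes the direction of the estimator's drift precise rather than heuristic. Finally, the gap you flag at the end---that weight concentration alone cannot force the error down without a structural link between distance and load similarity---is exactly the gap the paper closes with its Proposition~1 (``BSs in closer proximity have more correlated traffic patterns''), which it proves empirically by citing measurement studies rather than stating as a formal hypothesis. So your diagnosis of where the real content of the theorem lies coincides with the paper's structure; you have simply made explicit, as a spatial-regularity assumption to be stated and used, what the paper treats as an empirically established fact, and your plan would yield a more rigorous (if more heavily hypothesized) version of the same argument.
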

\begin{proof}
Substituting~\eqref{eq16} into \eqref{eq15}, we get 
\begin{equation}\label{eq:simplified}
    \hat \lambda _j  = \frac{{\sum\limits_{a = 1}^N {\lambda _a  \times\frac{{d_{\max } }}{{d_{j,a}^n }} } }}{{\sum\limits_{a = 1}^N {\frac{{d_{\max } }}{{d_{j,a}^n}}}}}= \frac{{\sum\limits_{a=1}^N \lambda_a \frac{{1}}{{d_{j,a}^n}}}}{{\sum\limits_{a=1}^N \frac{{1}}{{d_{j,a}^n}}}}.
\end{equation}
The $n$-dependent term of~\eqref{eq:simplified}, $\sum_{a=1}^N \dfrac{{1}}{{d_{j,a}^n}}$, is increasingly dominated by the terms with the smallest values of $d^n_{j,a}$ as $n$ increases, that is, the weighting factor, $w_{i,j}=\dfrac{{d_{\max}}}{{d_{j,a}^n}}$, heavily favors smaller distances and diminishes the influence of larger distances.
As $n$ goes to infinity, i.e., $\lim_{n \to \infty}\sum_{a=1}^N{\frac{{d_{\max}}}{{d_{j,a}^n}}}$ , the effect of larger distances (i.e., $d_{j,a}>1$) on the estimation approaches to zero, while smaller distances (i.e., $d_{j,a}<1$) have significant effects, such that 
\begin{equation}\label{eq:limit_to_infinity}
    \lim_{n \to \infty}\frac{{1}}{{d_{j,a}^n}} =\begin{cases}
        0, & d_{j,a}>1,\\
        1, & d_{j,a}=1,\\
        \infty, & d_{j,a}<1,
    \end{cases}
\end{equation}
indicating that while the weights of smaller distances increase infinitely, the weights of larger distances diminish, i.e., have no impact on the estimation process.
Furthermore, it can be observed from \eqref{eq:simplified} and \eqref{eq:limit_to_infinity} that the influence on the estimation process becomes increasingly concentrated at even closer distances as $n$ increases; in other words, the range of close distances narrows with growing $n$.
\begin{prop}\label{prop:spatial}
The BSs in closer proximity have more correlated traffic patterns.
\end{prop}
\begin{proof}
This kind of statements usually require an empirical proof. 
In this regard, the authors in~\cite{spatial-proof} found that the traffic patterns of the closer BSs in terms of distance are correlated to each other. 
Similarly, the study in~\cite{spatial-2} demonstrated a spatial correlation in BS traffic patterns.
Additionally, the work in~\cite{spatial-milan-2}, which specifically analyzed the same dataset as our study, though not explicitly stated, suggests from their analyses that the traffic of closer cells exhibits greater similarity.
\end{proof}

Considering the results derived from \eqref{eq:simplified} and Proposition \ref{prop:spatial}, it becomes evident that as the estimation process is increasingly influenced by cells closer to the cell-in-question (as indicated by the growing value of $n$), the estimation error, $\epsilon$, given by $\epsilon = |\hat\lambda_j - \lambda_j|=\left|\dfrac{{\sum\limits_{a = 1}^N {\lambda _a  \times\frac{{d_{\max } }}{{d_{j,a}^n }} } }}{{\sum\limits_{a = 1}^N {\frac{{d_{\max}}}{{d_{j,a}^n }} } }} - \lambda_j\right|$ tends to be lower, since the estimation has more correlated components.
\end{proof}

The parameter $n$ plays a crucial role in the estimation process, such that the increasing values of $n$ make the estimation more immune to the growing number of SBSs, $N$.

\begin{corollary}\label{cor:moreN}
 For a fixed value of $n$, the growing values of $N$ make the traffic load estimation worse, such that the estimation error, $\epsilon$ increases with $N$.   
\end{corollary}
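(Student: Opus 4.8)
The plan is to mirror the argument of Theorem~\ref{theorem2}, but now hold $n$ fixed and vary $N$, reusing the same two ingredients: the weighted-average structure of~\eqref{eq:simplified} and the distance--correlation link of Proposition~\ref{prop:spatial}. First I would recast the error in deviation form. Writing $w_a = 1/d_{j,a}^n$ and $\delta_a = \lambda_a - \lambda_j$, equation~\eqref{eq:simplified} gives
\begin{equation}\label{eq:cor_error}
    \epsilon = |\hat\lambda_j - \lambda_j| = \left|\frac{\sum_{a=1}^N \delta_a\, w_a}{\sum_{a=1}^N w_a}\right|,
\end{equation}
so the error is a weighted average of the neighbour deviations $\delta_a$, with weights $w_a$.

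Next I would exploit the ordering of the candidate cells. As in the setup of~\eqref{eq14}, the neighbours are arranged incrementally by proximity, so incrementing $N$ to $N+1$ appends a cell at distance $d_{j,N+1}\ge d_{j,N}$. By Proposition~\ref{prop:spatial}, this more distant cell carries traffic that is less correlated with $\lambda_j$, hence a deviation $\delta_{N+1}$ whose expected magnitude exceeds that of the already-included closer cells. The decisive point is that $n$ is held fixed and finite: the appended weight $w_{N+1}=1/d_{j,N+1}^n$ stays strictly positive and does not vanish, in sharp contrast to the $n\to\infty$ regime of~\eqref{eq:limit_to_infinity}, where distant cells are filtered out entirely. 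Thus each newly admitted cell injects a genuine, weakly-correlated, large-deviation term into the numerator of~\eqref{eq:cor_error} that is not suppressed. To convert this into a monotonicity statement I would pass to expectations under the deviation model implied by Proposition~\ref{prop:spatial}, namely $\mathbb{E}[|\delta_a|]$ increasing in $d_{j,a}$, and compare $\mathbb{E}[\epsilon]$ at $N$ and $N+1$.

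The main obstacle is precisely this comparison: appending a term simultaneously enlarges the numerator (potentially harmful) and the denominator (stabilising), and for a collection of equally-reliable, zero-mean, independent samples a weighted average would in fact \emph{reduce} the error through cancellation. The argument must therefore rule out such benign averaging, which I would do using two features of the setting. First, the asymmetry of Proposition~\ref{prop:spatial}: incoming cells are systematically less correlated than the incumbents, so their deviations are larger. Second, spatial coherence: because the underlying traffic field varies smoothly with distance, the deviations of distant cells do not behave like independent zero-mean noise that averages away, but act as a coherent bias that drags $\hat\lambda_j$ toward the (different) load level of the far region. Combined with the finiteness of $n$---which keeps the far-cell weights bounded away from zero---this coherent bias grows with $N$, yielding $\mathbb{E}[\epsilon_{N+1}]\ge\mathbb{E}[\epsilon_N]$. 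Making the bias-growth bound quantitative (e.g.\ bounding the marginal change in~\eqref{eq:cor_error} and showing it is nonnegative in expectation) is the crux; the remaining steps are the routine bookkeeping of the weighted average.
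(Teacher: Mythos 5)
Your core argument is the same as the paper's, but you have substantially out-worked it: the paper's entire proof of Corollary~\ref{cor:moreN} is the single informal sentence that increasing $N$ brings farther BSs into the estimation, which (by the logic of Proposition~\ref{prop:spatial}) are less correlated with the cell-in-question and therefore inflate $\epsilon$. That sentence is precisely the first two paragraphs of your sketch — ordering by proximity, Proposition~\ref{prop:spatial} for the distance--correlation link, and the observation that for fixed finite $n$ the weight of a newly admitted far cell does not vanish, in contrast to the $n\to\infty$ regime. Where you genuinely go beyond the paper is your third paragraph: you notice that a weighted average of additional terms does not automatically get worse — for independent zero-mean deviations it would get \emph{better} by cancellation — so the claim needs an extra structural assumption (your ``coherent bias'' from the spatially smooth traffic field) to rule out benign averaging. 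The paper never confronts this obstacle; its proof silently assumes that adding weakly correlated terms degrades the estimate. So your proposal is not a complete proof (you say yourself the quantitative bias-growth bound is the unfinished crux), but neither is the paper's, and yours correctly identifies the exact point at which both arguments are incomplete. In short: same route, but your version makes explicit the hidden assumption on which the corollary actually rests, which is the more valuable contribution here.
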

\begin{proof}
As the number of SBSs, $N$, increases, more BSs that are farther from the cell-in-question contribute to the cell load estimation process, which, in turn, boosts the cell load estimation error, $\epsilon$.
\end{proof}

\begin{lemma}
The deviation in the estimation error with increasing values of $N$ reduces with the growing values of $n$.
\label{lemma1}
\end{lemma}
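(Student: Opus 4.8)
The plan is to make the $N$-dependence of the weighted estimate explicit and then bound how much a single increment of $N$ can move it, showing that this bound collapses as $n$ grows. Writing $w_a = 1/d_{j,a}^n$ and denoting by $\hat\lambda_j(N) = \sum_{a=1}^N \lambda_a w_a \big/ \sum_{a=1}^N w_a$ the estimate built from the $N$ nearest cells (which, by the distance-based construction, are arranged incrementally by proximity), the first step is to compute the incremental change $\hat\lambda_j(N+1) - \hat\lambda_j(N)$ produced by appending the next, and therefore farther, cell. A short manipulation of the two weighted averages yields the closed form $\hat\lambda_j(N+1) - \hat\lambda_j(N) = \frac{w_{N+1}\,(\lambda_{N+1} - \hat\lambda_j(N))}{\sum_{a=1}^N w_a + w_{N+1}}$, which isolates the effect of the new cell into a single sensitivity factor multiplying the residual $\lambda_{N+1} - \hat\lambda_j(N)$.

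Next I would bound this sensitivity factor. Since the cells are ordered by increasing distance, $d_{j,1}$ is the smallest distance, so $\sum_{a=1}^N w_a \ge w_1$ and the factor obeys $\frac{w_{N+1}}{\sum_{a=1}^N w_a + w_{N+1}} \le \frac{w_{N+1}}{w_1} = \left(\frac{d_{j,1}}{d_{j,N+1}}\right)^n$. Because the newly added cell is farther than the nearest one, $d_{j,1}/d_{j,N+1} < 1$, so this bound is strictly decreasing in $n$ and tends to zero as $n \to \infty$. As the normalized loads lie in $[0,1]$, the residual $|\lambda_{N+1} - \hat\lambda_j(N)|$ is uniformly bounded, so the magnitude of each incremental change in the estimate is dominated by $\left(d_{j,1}/d_{j,N+1}\right)^n$ and therefore shrinks with $n$.

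Finally I would transfer this conclusion from the estimate to the error. Since the true load $\lambda_j$ does not depend on $N$, the triangle inequality gives $|\epsilon(N+1) - \epsilon(N)| = \big||\hat\lambda_j(N+1) - \lambda_j| - |\hat\lambda_j(N) - \lambda_j|\big| \le |\hat\lambda_j(N+1) - \hat\lambda_j(N)|$, so the deviation of the error under an increment of $N$ inherits the same decaying bound; summing over a range of $N$ then extends the argument from a single step to the cumulative deviation. Hence larger $n$ renders the error increasingly insensitive to $N$, which is the claim, and it dovetails with Corollary~\ref{cor:moreN}, whose $N$-induced error growth is precisely what $n$ damps.

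The main obstacle I anticipate is not the algebra but fixing a precise meaning for ``the deviation in the estimation error with increasing $N$,'' since the statement is qualitative. The cleanest route is to read it as the per-step change $\epsilon(N+1) - \epsilon(N)$ (and its sum over $N$), which is exactly what the incremental computation above controls. Some care is also needed because the residual $\lambda_{N+1} - \hat\lambda_j(N)$ is signed and data-dependent, so the argument must bound its magnitude rather than track it exactly, invoking the boundedness of the normalized loads to keep that residual under control.
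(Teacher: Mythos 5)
Your proof is correct, and it takes a genuinely different — and markedly more rigorous — route than the paper's. The paper proves this lemma purely qualitatively: it cites Corollary~\ref{cor:moreN} to assert $\frac{\mathrm{d}\epsilon(N)}{\mathrm{d}N}>0$, cites Theorem~\ref{theorem2} to assert $\frac{\mathrm{d}\epsilon(n)}{\mathrm{d}n}<0$, and then argues that because the product of these two derivatives is negative, growth in $n$ mitigates the error growth induced by $N$. That argument leans entirely on previously stated results and is really a plausibility claim: the lemma is implicitly about a mixed sensitivity (how the $N$-dependence of $\epsilon$ changes with $n$), and the sign of a product of two single-variable derivatives does not formally control that. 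Your approach supplies exactly what is missing: the incremental identity $\hat\lambda_j(N+1)-\hat\lambda_j(N)=\frac{w_{N+1}\left(\lambda_{N+1}-\hat\lambda_j(N)\right)}{\sum_{a=1}^{N}w_a+w_{N+1}}$ (which checks out algebraically), the bound of the sensitivity factor by $\left(d_{j,1}/d_{j,N+1}\right)^n$, and the reverse triangle inequality to pass from the estimate to the error, yielding an explicit, geometrically-decaying-in-$n$ bound on the per-step change of $\epsilon$ in $N$. Two side benefits of your version are worth noting: it is scale-invariant (the distance ratio is what decays, so you avoid the paper's unit-dependent case split on $d_{j,a}\gtrless 1$ in Theorem~\ref{theorem2}), and it forces a precise formalization of the vague word ``deviation'' as the per-step (or summed) change $|\epsilon(N+1)-\epsilon(N)|$. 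The only caveats are mild: your bound degenerates if $d_{j,N+1}=d_{j,1}$ (you need strictly increasing distances, which holds generically), and your reading of ``deviation'' is an interpretive choice — but it is a defensible one, and arguably the statement the lemma ought to have made.
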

\begin{proof}

Corollary~\ref{cor:moreN} states that the increasing values of $N$ tend to increase the traffic load estimation error, $\epsilon$, such that $\dfrac{\text{d}\epsilon (N)}{\text{d}N}>0$.
On the other hand, Theorem~\ref{theorem2} indicates that the estimation error, $\epsilon$, reduces with growing values of $n$, i.e., $\dfrac{\text{d}\epsilon(n)}{\text{d}n}<0$.
When the effects of increasing values of $N$ and $n$ are combined, the effect of increasing $N$ is mitigated with the effect of increasing $n$, such that $\dfrac{\text{d}\epsilon (N)}{\text{d}N}\cdot \dfrac{\text{d}\epsilon(n)}{\text{d}n}<0$, and thereby if both $N$ and $n$ grow together, the deviation in the estimation error reduces.
\end{proof}

\subsection{Random Cell Selection Traffic Load Estimation}
This approach utilizes a random selection of surrounding cells for traffic load estimation. 
\subsubsection{Random Selection without Weighting}
The traffic load of a sleeping SBS is estimated based on the average traffic load of randomly selected surrounding SBSs. The estimation is calculated using a formula similar to \eqref{eq14}, where the selection of neighboring cells, $a$, is random.
\subsubsection{Random Selection with Weighting}
This variation applies a weighting mechanism to the randomly selected surrounding cells, the same as \eqref{eq15}, with the selection of $a$ being random. The weighting factors vary according to the distance from the sleeping SBS, enhancing the accuracy of the estimation. 
\subsection{Clustering-Based Traffic Load Estimation}
This approach involves clustering SBSs based on their traffic patterns and estimating the traffic load of a sleeping SBS using the average load of active SBSs within the same cluster. 
We employ the $k$-means algorithm, an unsupervised machine learning technique, for clustering the SBSs. 
The number of clusters, a crucial hyper-parameter in the $k$-means algorithm, is determined using the elbow method \cite{article}, which
assesses different cluster numbers by calculating the sum of squared errors (SSE) between data points and centroids for each potential cluster count. The SSE is given by \cite{article}
\begin{equation} 
SSE = \sum\limits_{g = 1}^G {\sum\limits_{x_m\in \kappa_g} {(x_m - \varkappa_g )} } ^2,
\label{eq18}
\end{equation}
where $G$ represents the optimal number of clusters, $\varkappa_g$ the centroid of each cluster $\kappa_g$, and $x_m$ each sample in $\kappa_g$. 
The optimal cluster number is identified at the point where the SSE curve forms an "elbow" before flattening.
\subsubsection{Multi-level Clustering-Based Traffic Load Estimation}
The multi-level clustering (MLC) approach involves repeated clustering of SBSs based on their traffic patterns to estimate the traffic load of offloaded SBSs. 
This method employs the elbow method to determine the optimum number of clusters, followed by the application of the $k$-means algorithm for clustering. The traffic load of a sleeping SBS is then estimated based on the average load of active SBSs in the same cluster. This iterative approach, as outlined in Algorithm \ref{alg1}, ensures progressively refined clustering with each layer, leading to more precise traffic load estimations for sleeping SBSs~\cite{maryamarxive}.

\begin{algorithm}
\caption{Multi-Level Clustering (MLC) using $k$-means}
\DontPrintSemicolon  
\SetAlgoLined  
\KwData{Traffic loads of SBSs $\lambda_{a}$, maximum number of layers $L$}
\KwResult{Clustered SBSs with estimated traffic loads}

\SetKwFunction{FMain}{MLC\_k\_means}
\SetKwProg{Fn}{Procedure}{:}{}
\Fn{\FMain{$\lambda$, $L$}}{
    Determine the optimal number of clusters $G$ using the elbow method\;
    Initialize layer count $l = 1$\;
    \While{$l \leq L$}{
        Perform $k$-means clustering on $\lambda$ to form $G$ clusters\;
        \For{cluster $\kappa_g$}{
            Calculate the mean traffic load $\mu_\text{m}$\;
            \For{sleeping SBS in $\kappa_g$}{
                Estimate the traffic load as $\mu_\text{m}$\;
            }
        }
        Update $\lambda$ with estimated ones for sleeping SBSs\;
        Increment the layer count $l$ by 1\;
    }
    \Return The final clusters with estimated traffic loads\;
}
\label{alg1}
\end{algorithm}

\begin{figure}[t]
\centerline{\includegraphics[width = 7.cm ]{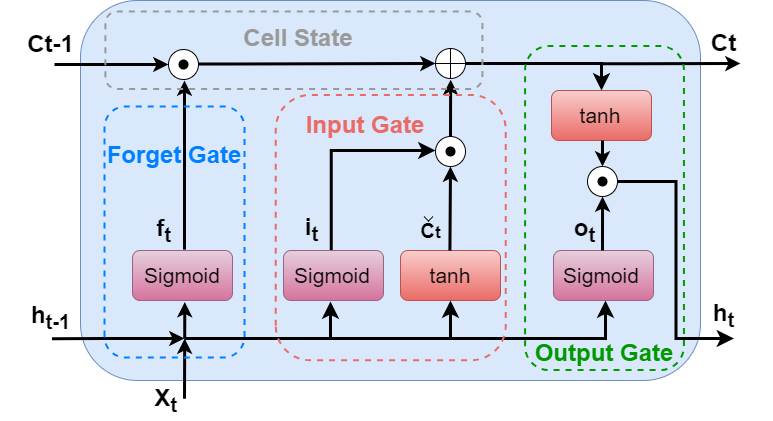}}
\caption{The architecture of an LSTM cell.}
\vspace{-.5cm}
\label{fig-6}

\end{figure}

\subsection{Temporal Traffic Load Prediction Using LSTM}
To improve the accuracy of traffic load estimation, we incorporate LSTM networks to model temporal dependencies in SBS traffic behavior. As a specialized type of recurrent neural network (RNN), LSTM is well-suited for capturing long-range sequential patterns, making it a natural fit for predicting time-evolving traffic profiles.
The architecture of an LSTM cell, illustrated in Fig.~\ref{fig-6}, includes three primary gates—forget, input, and output—that regulate the flow of information through a memory cell. 
The governing set of equations of the LSTM cell are given by
\begin{subequations}
\begin{align}
    f_t &= \sigma(W_f [h_{t-1}, x_t] + b_f), \label{eq:lstm_forget} \\
    i_t &= \sigma(W_i [h_{t-1}, x_t] + b_i), \label{eq:lstm_input} \\
    \tilde{C}_t &= \tanh(W_c [h_{t-1}, x_t] + b_c), \label{eq:lstm_candidate} \\
    C_t &= f_t \odot C_{t-1} + i_t \odot \tilde{C}_t, \label{eq:lstm_cell_state} \\
    o_t &= \sigma(W_o [h_{t-1}, x_t] + b_o), \label{eq:lstm_output} \\
    h_t &= o_t \odot \tanh(C_t),
    \label{eq:lstm_hidden}
\end{align}
\end{subequations}
where $x_t$ denotes the input vector at time step $t$, and $h_{t-1}$ is the hidden state from the previous time step.
The variables $f_t$, $i_t$, and $o_t$ represent the forget, input, and output gates, respectively. $\tilde{C}_t$ is the candidate cell state, and $C_t$ is the updated internal cell state (or memory). The output of the LSTM cell is the hidden state $h_t$, which is computed based on the current state of the cell and the output gate. 
The model parameters include trainable weight matrices $W$ and biases $b$. The function $\sigma(\cdot)$ denotes the sigmoid activation, and $\odot$ represents element-wise multiplication.

\subsubsection{Application of LSTM to Traffic Load Estimation}

In this study, LSTM networks are employed to estimate the current traffic loads of SBSs based on their historical traffic data. The data set consists of normalized traffic load values recorded every 10 minutes for a period of 30 days, resulting in 144 time slots per day. To enhance robustness, outliers are removed using z-score filtering with a threshold of 2.5. The filtered dataset is then randomly shuffled to improve generalization performance during training.
To prepare the data for supervised learning, a sliding window approach is applied. For each SBS, overlapping input-output sequences are generated, where each input sequence consists of a fixed number of past traffic load values—referred to as the window size—and the corresponding output is the immediate next time step. This formulation allows the model to capture short-term temporal dependencies in traffic fluctuations.
The data is split into two parts: 60\% for training and 40\% for testing.

The LSTM model is configured according to the hyperparameters summarized in Table~\ref{table:simulation_params}, including a single LSTM layer followed by a dense output layer. 
The model uses the Adam optimizer and is trained to minimize the mean absolute percentage error (MAPE). 
Its performance is evaluated by comparing predicted and actual traffic loads on the test dataset.

\section{Performance Evaluation}\label{sec:performance}
This section evaluates the effectiveness of our proposed traffic load estimation methods using the Milan dataset detailed in Section II-C. We assess both spatial and temporal estimation methods for predicting the traffic load of sleeping SBSs. Simulation parameters are provided in Table~\ref{table:simulation_params}.
For the spatial estimation methods, monthly traffic values are averaged per time slot for each SBS, and the algorithms are evaluated over multiple iterations with randomly selected sleeping SBSs to ensure robustness. 
In contrast, the LSTM-based temporal estimation model operates on the full traffic time series per SBS and is trained to estimate current traffic based on its own historical pattern.
\textit{Estimation error}, quantified using MAPE that captures the average relative difference between predicted and actual traffic loads, is used as a performance metric in the simulation campaigns.
\begin{centering}
\begin{table}[tb!] 
\caption{Simulation Parameters}
\centering 
\begin{tabular}{|c|c|} 
  \hline 
  \text{Parameter} & \text{Value} \\
  \hline
\multicolumn{2}{|c|}{\textbf{Spatial Estimation}} \\
\hline
Number of SBSs & 5000 \\

Number of time slots & 144 \\

Time slot duration & 10 m \\

Number of days & 30 \\

Number of iterations & 300 \\

Optimal $G$ using elbow method & 3 \\
\hline
\multicolumn{2}{|c|}{\textbf{Temporal Estimation}} \\
\hline
Learning rate & 0.001 \\


Number of LSTM layeres & 1 \\

Loss Function & MAE \\

Optimizer & Adam \\

Number of Epochs & 50 \\

Batch Size & 32 \\
\hline

\end{tabular}
\label{table:simulation_params} 
\end{table}
\end{centering}

Figure~\ref{fig-2} shows the MAPE for the distance-based estimation method with weighting, as a function of the number of neighboring SBSs.
The analysis, conducted for different powers of distance, \(n\), in \eqref{eq16} specifically $n=$ 1, 3, 5, and 10, provides key insights that are consistent with our theoretical findings:
\begin{itemize}
\item As noted in Corollary~\ref{cor:moreN}, increasing the number of neighboring SBSs ($N$) increases the estimation error.
\item Consistent with Theorem~\ref{theorem2}, higher values of $n$ significantly reduce the error. For example, the MAPE drops from 45\% at $n=1$ to about 15\% at $n=5$, and becomes minimal at $n=10$, underscoring the benefits of prioritizing proximity in weight calculations.
\item This behavior supports Proposition~\ref{prop:spatial}, affirming that traffic loads from closer BSs are more correlated and thus provide more reliable estimations.
\end{itemize}

\begin{figure}[t]
\centerline{\includegraphics[width = 7.cm ]{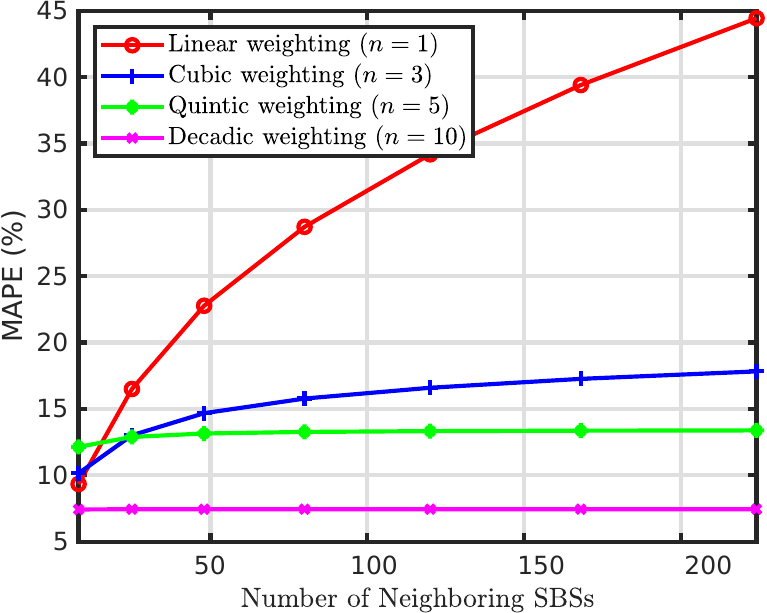}}
\caption{MAPE of distance-based estimation with weighting across different values of $n$ and number of neighbors.}
\label{fig-2}
 \vspace{-1em}
\end{figure}

Figure~\ref{fig-3} compares the performance of multiple spatial estimation techniques in terms of MAPE. The multi-level $k$-means clustering approach shows substantial improvement as the number of clustering layers increases, achieving near-zero error with seven layers. The distance-based method without weighting exhibits increasing MAPE as the number of neighboring cells increases, emphasizing the relevance of incorporating distance in estimation, as suggested by Theorem \ref{theorem2} and Proposition \ref{prop:spatial}, which highlight the importance of closer proximities in reducing errors.
In contrast, the random selection method with weighting sees decreased errors due to prioritizing closer neighbors, similar to the effects described in Theorem \ref{theorem2}. The random selection method without weighting is excluded due to its consistently high and unstable errors, underscoring the importance of distance consideration in neighbor selection as reinforced by Lemma~\ref{lemma1}.

\begin{figure}[t]
\centerline{\includegraphics[width = 7.cm ]{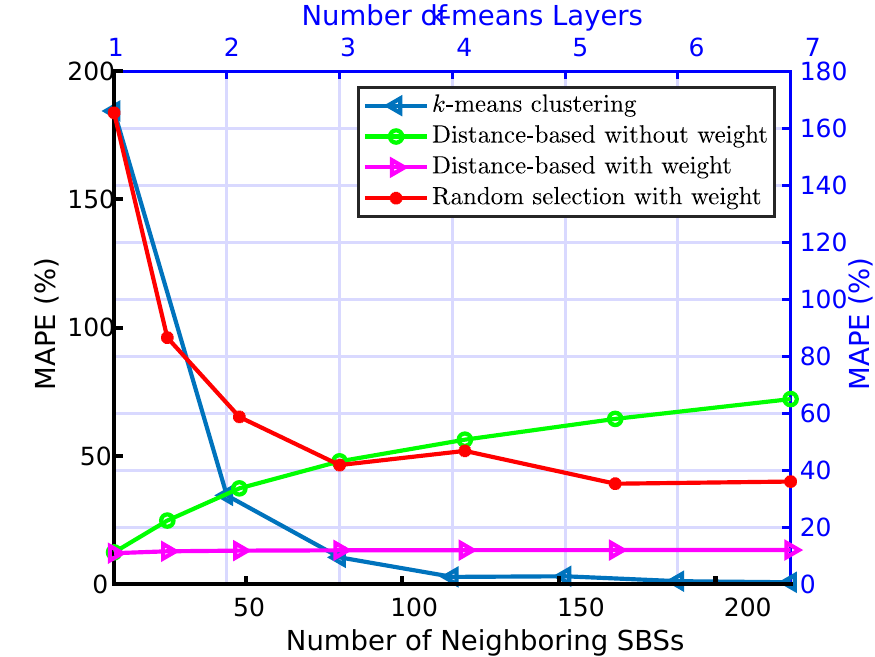}}
\caption{MAPE comparison of spatial estimation methods. The blue axis corresponds to clustering-based methods; the black axis corresponds to others.}
\label{fig-3}
\vspace{-1em}
\end{figure}

Figure~\ref{fig-7} presents the MAPE results for the LSTM-based temporal estimation method under various configurations of window size (i.e., number of past time steps) and number of LSTM units.
As shown in the figure, increasing the window size improves prediction accuracy, particularly when paired with a sufficient number of LSTM units. For instance, with 5 LSTM units, the MAPE decreases from 4.17\% at a window size of 4 to 1.22\% at a window size of 12.  
This trend becomes more consistent with 10 and 20 LSTM units, where the MAPE reaches as low as 0.68\% and 0.64\%, respectively.
These results confirm that LSTM-based modeling is effective in capturing sequential patterns in SBS traffic. Furthermore, they underscore the importance of tuning model hyperparameters—such as input window size and LSTM capacity—to achieve robust and accurate predictions under varying network conditions.

\begin{figure}[h!]
\centerline{\includegraphics[width = 7.cm ]{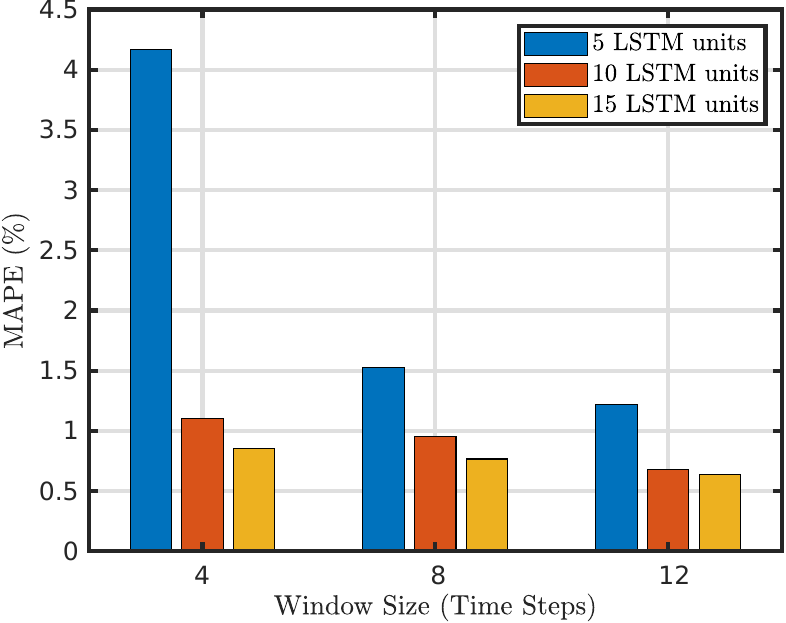}}
\caption{MAPE of LSTM-based traffic load estimation for varying window sizes and LSTM units.}
\label{fig-7}
\vspace{-1em}
\end{figure}

\section{Conclusion}\label{sec:conclusion}

This study addresses the critical challenge of estimating traffic loads for SBSs in sleep mode within vHetNets—an essential step toward enabling practical and energy-efficient cell switching.  
We developed mathematical frameworks to characterize spatial estimation techniques and evaluated both spatial and temporal methods to bridge the traffic load information gap.
When properly configured, both multi-level $k$-means clustering and LSTM networks achieved a MAPE of less than 1\%, demonstrating high accuracy in traffic prediction.  
Validated using a real-world dataset from Milan, these methods offer robust solutions to a key implementation bottleneck in advanced cell switching schemes.
While this work primarily focuses on traffic load estimation accuracy, which is crucial to make cell switching techniques applicable in real life implementations, future research will investigate the practical implications of estimation errors on network power consumption and decision changes in cell switching strategies.

\ifCLASSOPTIONcaptionsoff
  \newpage
\fi
\small


\end{document}